\title{Incentive Mechanism Design for ROI-constrained Auto-bidding}
\author{
	Bin Li\textsuperscript{\rm 1}, 
	Xiao Yang \textsuperscript{\rm 2},
	Daren Sun \textsuperscript{\rm 3},
	Zhi Ji \textsuperscript{\rm 3},
	Zhen Jiang \textsuperscript{\rm 3},
	Cong Han \textsuperscript{\rm 3},
	Dong Hao \textsuperscript{\rm 1} \\
\affiliations
	\textsuperscript{\rm 1} University of Electronic Science and Technology of China \\
\textsuperscript{\rm 2} Independent Researcher \\
\textsuperscript{\rm 3} Baidu Inc. \\
\emails
libin@std.uestc.edu.cn, shawcsn@gmail.com, \{sundaren,  jizhi, jiangzhen01, hancong01\}@baidu.com, haodong@uestc.edu.cn
}
\begin{document}

\maketitle

\begin{abstract}	
	Auto-bidding plays an important role in online advertising and has become a crucial tool for advertisers and advertising platforms to meet their performance objectives and optimize the efficiency of ad delivery. Advertisers employing auto-bidding only need to express high-level goals and constraints, and leave the bid optimization problem to the advertising platforms. As auto-bidding has obviously changed the bidding language and the way advertisers participate in the ad auction, fundamental investigation into mechanism design for auto-bidding environment should be made to study the interaction of auto-bidding with advertisers. In this paper, we formulate the general problem of incentive mechanism design for ROI-constrained auto-bidding, and carry out analysis of strategy-proof requirements for the revenue-maximizing and profit-maximizing advertisers. In addition, we provide a mechanism framework and a practical solution to guarantee the incentive property for different types of advertisers.
\end{abstract}
\newtheorem{lemma}{Lemma}
\newtheorem{defn}{Definition}
\newtheorem{prop}{Proposition}
\newtheorem{corollary}{Corollary}
\newtheorem{example}{Example}
\newtheorem{theorem}{Theorem}

\section{Introduction}
Online advertising has been an indispensable part of modern advertising market. According to the report of IAB, full-year online advertising revenue reached \$124.6 billion in 2019. An important reason for wide-spread adoption of online advertisement comes from its high return on investment (ROI) for advertisers, compared to other traditional marketing methods \cite{moran2005search}. Every day, tens of billions of auctions are held in real time to decide which advertisers' ads are shown, how these ads are arranged, and what the advertisers are charged. 
In the traditional setting, advertisers manually bid on the relevant ads to target their audiences. To keep and improve campaign performance, advertisers should adjust the bid prices frequently according to various bidding feedback signals such as cost-per-click, ad impressions or clicks, and so on \cite{yang2019aiads}. 
However, it is not a trivial matter to carry out the bidding optimization, especially for small and local businesses that do not have dedicated marketing staff. 

Due to the limitation of manual optimization, auto-bidding is becoming a prevalent and advanced tool for advertisers to maximize results based on their campaign goals. Auto-bidding strategy can automatically set bids for ads based on that ad's likelihood to result in a click or conversion. The motivation behind auto-bidding is performance and product simplicity \cite{aggarwal2019autobidding}. Instead of asking advertisers for fine-grained bids, the advertising platform only asks them to submit higher level objectives and constraints. Then the auto-bidding algorithms set unique bids for each auction, based on information available at the time of the auction. Leveraging the machine learning models and real-time control strategies, auto-bidding can provide a more efficient ad delivery and a more steady guarantee for advertisers' constraints.

\begin{figure}[t]
	\centering
	\includegraphics[width=0.8\columnwidth]{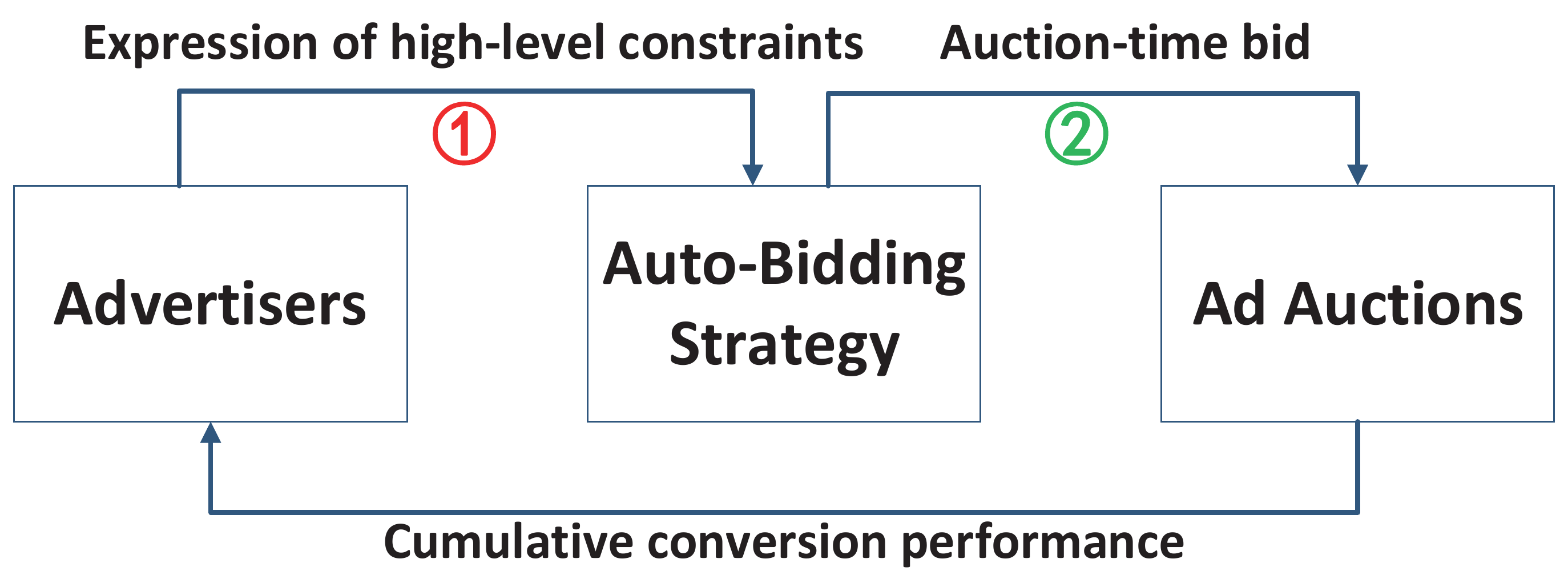}
	\caption{The procedure of auto-bidding.}
	\label{fig1}
\end{figure}

The general procedure of auto-bidding is presented in Figure \ref{fig1}. It is a two-stage bidding expression. At the first stage, advertisers should set their constraints or objectives with high level expressivity such as a target cost-per-acquisition (CPA) or return on investment (ROI). After that, at the second stage, based on the advertisers' reports, the auto-bidding strategy converts these goals and constraints into per-auction bids at serving time to participate in the ad auction, based on predictions of performance of each potential ad impression.  
Apparently, the auto-bidding procedure completely changed the bidding language and the interaction between the advertisers and the advertising platforms. 
Moreover, the advertisers are less sensitive to their obtained performance in a single auction but raise more concerns about their cumulative performance across many auctions.
As the traditional mechanism design for online advertising mostly focuses on the direct interaction between bidders and sellers, the auto-bidding paradigm poses new challenges in designing delivery strategies. If there is no strong incentive guarantee in the basic auto-bidding environment, advertisers may not truly express their private objectives and constraints. Based on false information, the advertising platform is not able to fully optimize advertisers' objectives and it is also hard for the advertisers to reason about the optimal report strategy.

In this paper, we formulate the general problem of incentive mechanism design for ROI-constrained auto-bidding, and analyze the strategy-proof requirements for the revenue-maximizing and profit-maximizing advertisers. Specifically, we obtain the solutions for two common expressions of the ROI constraints. For advertisers with a target CPA expression, we provide a full characterization of incentive-compatible (IC) delivery mechanisms and build a double-layer framework to implement the IC mechanisms in reality. For advertisers with a target ROI expression, two classes of delivery mechanisms are proposed so as to satisfy advertisers' specific needs. The results developed in this paper contribute to the literature of auto-bidding and provide some insights into auto-bidding product design.

\subsection{Related Literature}
{\bf Bidders with constraints.} The literature on bidders with constraints falls into two classes. The first class focuses on the optimal auction design. \cite{laffont1996optimal} is the first work in this line, which studies symmetric bidders with budget constraint. The authors show that the optimal auction is the one with a reserve price. After this work, a series of paper such as \cite{che2000the,malakhov2008optimal,pai2014optimal} improve their result and extend it into other directions. 
\cite{golrezaei2020roi} investigates optimal auction design for bidders with target ROI constraints.
The second class considers the optimal bidding strategy of constrained bidders \cite{goel2009hybrid,heymann2019cost,golrezaei2020roi,aggarwal2019autobidding}. In particular, \cite{heymann2019cost} studies bidders with target CPA constraints and derives a Nash equilibrium for the second price auction. \cite{golrezaei2020roi} presents that bidders with ROI constraints tend to shade their bids in auctions. 

{\bf Bidding optimization.} Bidding optimization has been well studied in online advertising, especially in sponsored search \cite{dar2009bid,broder2011bid,zhao2018deep}. A systematic study of a family of greedy strategies is undertaken in \cite{cary2007greedy}. For advertisers with ROI constraints, \cite{rey2010conversion} proposes a conversion-based bid strategy to automatically adjust the bid price. For display advertisers, a bid strategy called optimized cost per click is proposed in \cite{zhu2017optimized}. Very recently, \cite{aggarwal2019autobidding} proposes a general framework to construct optimal bidding strategies for advertisers with multiple affine constraints in multi-slot truthful auctions. There is also a lot of work concentrating on budget optimization \cite{feldman2007budget,zhang2012joint,agarwal2014budget,lee2013real,karande2013optimizing}, where the goal is to maximize bidders' utilities by spending their budgets smoothly over the day. To satisfy different objectives from advertisers, online advertising platforms \cite{google:auto,microsoft:auto,facebook:auto} have provided various types of auto-bidding products such as oCPM, oCPC (Target CPA), Target ROAS (Target ROI), Maximize Conversions, etc. To the best of our knowledge, we present the first work that considers the incentive problem for the auto-bidding.

The reminder of this paper is organized as follows. We first introduce the basic setting of auto-bidding environment and formulate the general incentive problem for ROI-constrained advertisers. Then we study delivery mechanisms that can incentivize advertisers to reveal their true target CPA constraints, and propose practical solutions that are able to be carried out in reality. After that, we turn to advertisers with target ROI expressions, and propose two families of incentive-compatible delivery mechanisms.

\section{Preliminaries}
\subsection{Basic Setting}
Assume that there is a set of advertisers $A$. For a period of time $T$, there is a set of ad request $R$, each with potential multiple impressions. Given an advertiser $a\in A$ employing auto-bidding, denote $R_a \subseteq R$ by the ad request set that is related to $a$. Associated with each ad request $j\in R_a$ are a set of available slots $S_j$ and a set of matched advertisers $N_j$ ($a\in N_j$). Besides, for $i\in N_j$, denote $ctr_i^{js}$ by the {\it click-through rate} (ctr) of $i$'s ad in slot $s\in S_j$ and let the {\it conversion rate} (cvr), the probability that a conversion, like an app installation, is acquired after click this ad be $cvr_i^j$. If a conversion does occur in ad request $j$, then $i$ receives a revenue $v_i^j$ which is from a finite set $V_i$.
When an ad request $j$ arrives, there is an auction to determine which advertisers' ads are displayed and what prices each advertiser is charged.

Let $b_j=(b_i^j)_{i\in N_j}$ be the bid profile of all advertisers in $j$ and $b_{-i}^j$ be the bid profile except $i$, i.e., $b_j=(b_i^j,b_{-i}^j)$. The ad auction computes the allocations $\{x_i^{js}(b_j)\}_{i\in N_j, s\in S_j}$ and the payments $\{p_i^{js}(b_j)\}_{i\in N_j,s\in S_j}$, where $x_i^{js}(b_j)$ is the indicator variable if slot $s$ of impression $j$ was allocated to $i$ and $p_i^{js}(b_j)$ is the price paid by $i$ if her ad is clicked. That is, if advertiser $i$ wins some slot $s$ in ad request $j$, then $x_i^{js}(b_j)=1$ and if her ad is further clicked by the user, she pays $p_i^{js}(b_j)$ to the advertising platform (the CPC model).

Since advertiser $a$ adopts auto-bidding, a proxy advertiser $\tilde{a}$, controlled by the advertising platform, will bid in each auction based on $a$'s objectives and predictions of performance of each impression. We use $cv_a^{j}(b_j)=\sum_{s}x_a^{js}(b_j)\cdot ctr_a^{js} \cdot cvr_a^j$ to denote the expected conversion of advertiser $a$ in ad request $j$ and $cpa_a^{j}(b_j)=\sum_{s}p_a^{js}(b_j)/{cv_a^{j}(b_j)}$ to denote the cost-per-acquisition (CPA) which represents the average cost for each conversion. Let $(cpa_a^j,cv_a^j)$ be the delivery function and $(cpa_a^j(b_j),cv_a^j(b_j))$ be an actual delivery in ad request $j$. 

Commonly, there are two goals for performance advertisers. The first one is optimizing the total revenue $U_a^r$, i.e., the cumulative conversion values obtained in the auto-bidding episode:
\begin{alignat}{1}
U_a^r &=\max_{\{b_a^j\}} \sum_{j\in R_a}v_a^j\cdot cv_a^j(b_j).
\end{alignat}
The second goal is maximizing the profit $U_a^p$:
\begin{alignat}{1}
U_a^p &=\max_{\{b_a^j\}} \sum_{j\in R_a}(v_a^j-cpa_a^j(b_j))\cdot cv_a^j(b_j).
\end{alignat}
These two goals can be formulated into a single utility function $U_a=\max_{\{b_a^j\}} \sum_{j\in R_a}(v_a^j-I_a\cdot cpa_a^j(b_j))\cdot cv_a^j(b_j)$, where $U_a=U_a^r$ if $I_a=0$ and $U_a=U_a^p$ for $I_a=1$.

As performance advertisers optimize the immediate tradeoff between conversions and costs generated directly from their ads, return on investment (ROI) has been the standard metric for measuring this tradeoff across all types of advertising \cite{borgs2007dynamics,auerbach2008an,wilkens2016mechanism}. ROI measures the ratio of the profit obtained (``return'') to the cost or price paid (``investment''), i.e., the density of profit in cost: $ROI = \frac{revenue {-} cost} {cost}$. In general, performance advertisers want to maximize their utilities with an ROI constraint. For the ROI constraint $ROI_a\ge \gamma_a$, we have:
\begin{alignat}{2}
& \frac{revenue {-} cost} {cost} \geq \gamma_a \\
& \Leftrightarrow \frac{\sum_{j\in R_a}(v_a^j-cpa_a^j(b_j))\cdot cv_a^j(b_j)}{\sum_{j\in R_a}cpa_a^j(b_j)\cdot cv_a^j(b_j)}\ge \gamma_a\\
& \Leftrightarrow cpa_a(\{b_j\})\le \frac{1}{1+\gamma_a}\cdot \frac{\sum_{j\in R_a}v_a^j\cdot cv_a^j(b_j)}{\sum_{j\in R_a}cv_a^j(b_j)},
\end{alignat}
where $cpa_a(\{b_j\})=\sum_{j}cpa_a^j(b_j)\cdot cv_a^j(b_j)/\sum_{j}cv_a^j(b_j)$ is the CPA delivered in the auto-bidding episode.

If advertiser $a$'s value-per-conversion $v_a^j$ is the same value for all ad requests, i.e., $v_a^{j_1}=v_a^{j_2}=v_a$ for all $j_1,j_2 (j_1\ne j_2)\in R_a$, then a target ROI (tROI) constraint can be transformed to a target CPA (tCPA) constraint:
\begin{alignat}{2}
roi_a\ge \gamma_a \Leftrightarrow cpa_a(\{b_j\})\le \frac{v_a}{1+\gamma_a}=t_a.
\end{alignat}
For this case, an advertiser with a tROI $\gamma_a$ is equivalent to an advertiser with a tCPA $t_a=\frac{v_a}{1+\gamma_a}$. In this paper, $\gamma_a$ is assumed as a bottom-line. That is, if the actual ROI in the auto-bidding episode does not reach the minimum requirement $\gamma_a$, then $U_a< 0$.

In practice, tCPA and tROI are two common ways to express advertisers' requirements. For conversions like app installations or sign-ups, a tCPA expression is more eligible to evaluate the delivery effects, as the changes of conversion values in different ad requests are relatively small. In other cases where the conversion value varies with ad requests, such as shopping sales, the advertising platform needs to learn tROI constraints to better optimize advertisers' goals. We denote $c_a$ by $a$'s constraint expression (a tROI $\gamma_a$ or a tCPA $t_a$) and specific its meaning in the context. To implement an auto-bidding product, the advertising platform needs to resolve two problems. The first one is to design and optimize the bidding algorithms to meet advertisers' objectives and constraints, the second one is to design delivery mechanisms to ensure the incentive property. 

\subsection{Optimal Bidding Strategy}
From the view of advertiser $a$, the whole auto-bidding episode is totally a black box, which takes her expression $c_a$ as input and outputs the ultimate ad delivery, such as the total conversions and spending. During the auto-bidding episode, the proxy advertiser $\tilde{a}$ tries to solve following optimization problem:

\begin{alignat}{1}
&\max_{\{b_a^j\}} \sum_{j\in R_a}(v_a^j-I_a\cdot cpa_a^j(b_j))\cdot cv_a^j(b_j),\\
&s.t.\quad ROI_a\ge \gamma_a \quad or \quad CPA_a\le t_a.
\end{alignat}

In the paper of \cite{aggarwal2019autobidding}, the authors build a general framework to compute $\tilde{a}$'s optimal bidding strategies though LP duality. In particular, given advertiser $a$'s goal and predictions of other advertisers' bid profiles, they show that there is an optimal bidding formula which takes in $a$'s value-per-conversion, and converts it into a bid in each ad request. For our scenario, we collect the results in Table 1. For example, if the value-per-conversion is unvaried for different conversions, then the optimal bid for a revenue-maximizing advertiser with a tCPA $t_a$ is $B_a\cdot cvr_a^j$, where $B_a=t_a+\frac{v_a}{m}$ is a constant.

\begin{table}[t]
	\centering
	\begin{tabular}{|m{1.4cm}<{\centering}|m{3.1cm}<{\centering}|m{2.58cm}<{\centering}|}
		\hline
		& Revenue-maximizing & Profit-maximizing \\ \hline
		tCPA & $\frac{(m \cdot t_a+v_a^j)\cdot cvr_a^j}{m}$ & $\frac{(m\cdot t_a+v_a^j)\cdot cvr_a^j}{m+1}$       \\ \hline
		tROI      &$\frac{(n+\gamma_a+1)\cdot v_a^j\cdot cvr_a^j}{n\cdot (\gamma_a+1)}$   & $\frac{(n+\gamma_a+1)\cdot v_a^j\cdot cvr_a^j}{(n+1)\cdot (\gamma_a+1)}$      \\ \hline
	\end{tabular}
	\caption{Optimal bidding strategies for different goals. $m$ and $n$ are two predetermined parameters.}
\end{table}
\subsection{Incentive Problem}
Given predictions of other advertisers' bid profiles, since the proxy advertiser $\tilde{a}$ is always committing an optimal bidding strategy, it is straightforward that the delivery function $\{(cpa_a^j,cv_a^j)\}_{j\in R_a}$ is only parameterized by $c_a$. For any expression $c_a$, $cpa_a^j(c_a)$ and $cv_a^j(c_a)$ are the expected delivery in ad request $j$. Now we formally define the delivery mechanism of the advertising platform and the related concepts in the rest of this section.
\begin{defn}
	A delivery mechanism is defined by a set of delivery functions $\{(cpa_a^j,cv_a^j)\}$, where $cpa_a^j: R\rightarrow R$ and $cv_a^j:R\rightarrow [0,1]$ are the CPA function and the conversion function for ad request $j$, respectively.
\end{defn}

Given a delivery mechanism $\{(cpa_a^j,cv_a^j)\}$, $a$'s utility $U_a$ can be expressed as
\begin{alignat}{1}
\sum_{j\in R_a}(v_a^j-I_a\cdot cpa_a^j(c_a))\cdot cv_a^j(c_a).
\end{alignat}
As $c_a$ is private information, if the delivery mechanism is not properly designed, advertiser $a$ has incentives to misrepresent her constraint to increase her own utilities. There are two problems when such behaviors arise. Firstly, the advertising platform is not able to fully optimize $a$' objectives based on false data, which could decrease the overall social welfare. Secondly, advertisers also need to take effort and time to reason out the optimal submission.

Denote $c_a$ by $a$'s true constraint and let $c_a'$ be her report. A delivery mechanism $\{(cpa_a^j,cv_a^j)\}$ is incentive-compatible (or strategy-proof) if expressing true constraint is a dominant strategy for $a$, no matter what the other advertisers do.
\begin{defn}
	A delivery mechanism $\{(cpa_a^j,cv_a^j)\}$ is incentive-compatible (IC) if $U_a(\{v_a^j\},\{(cpa_a^j,cv_a^j)\},c_a)\ge U_a(\{v_a^j\},\{(cpa_a^j,cv_a^j)\},c_a')$ for all $c_a$, all $c_a'$ and all $\{b_{-a}^j\}$.
\end{defn}

Since the advertising platform cannot force $a$ to adopt auto-bidding, we also require that $\{(cpa_a^j, cv_a^j)\}$ is individually rational, i.e., submitting true constraint to the advertising platform never suffers a loss.
\begin{defn}
	A delivery mechanism $\{(cpa_a^j,cv_a^j)\}$ is individually rational (IR) if $U_a(\{v_a^j\},\{(cpa_a^j,cv_a^j)\},c_a) \ge 0$ for all $c_a$ and all $\{b_{-a}^j\}$.
\end{defn}

An auto-bidding episode often lasts several days and the number of relevant ad requests matching advertiser $a$ is tens of thousands. Denote $cv_a(c_a)=\sum_{j\in R_a}cv_a^j(c_a)$ and $cpa_a(c_a)=\sum_{j}cpa_a^j(c_a)\cdot cv_a^j(c_a)/\sum_{j}cv_a^j(c_a)$ by the expected conversions and CPA realized in the auto-bidding episode, respectively. For practical reasons, we consider the following market model throughout this paper.
\begin{defn}[Intensive Auction Market]\label{intensive}
	An intensive auction market indicates that $cv_a$ and $cpa_a$ are differentiable and $cv_a(c_a)>0$ and $cpa_a(c_a)>0$ for any feasible $c_a$.
\end{defn}
In the rest of this paper, we design delivery mechanisms that satisfy IC and IR in intensive auction markets.

\section{Incentive Mechanisms for Advertisers with Target CPA Constraints}
According to different business models, ROI-constrained advertisers can have various expressions. In this section, we consider advertisers with tCPA expression. For this kind of advertisers, their value-per-conversion can be viewed as the same for all potential conversions. 

\subsection{The Double-Layer Framework}
For an advertiser $a$ employing auto-bidding, denote by $v_a$ as advertiser $a$'s value-per-conversion and $t_a\in (0,v_a)$ as her target CPA. In the beginning of the auto-bidding episode, $a$ is asked to submit $t_a$ to the advertising platform and based on the delivery mechanism $\{(cpa_a^j,cv_a^j)\}$, the platform implements the delivery $(cpa_a^j(t_a),cv_a^j(t_a))$ in each ad request $j$.
As the tCPA constraint is acting over the entire episode, it is not proper to deal with ad requests separately when designing IC and IR delivery mechanisms. To see this, we present an example to show that simply satisfying incentive compatibility in each single ad request does not suffice if the auto-bidding continues for a period.

\begin{example}[CPA-based second price auction]\label{example}
	Assume that there is only one available slot and once an advertiser wins the slot, a conversion occurs. A CPA-based second price auction allocates the slot to the advertiser with the highest tCPA report and charges her the second highest report.
\end{example}

It is easy to verify that when the advertising platform uses CPA-based second price auction, all advertisers will submit their true tCPA in each {\it single ad request}. Nonetheless, it is not the optimal strategy from the perspective of the whole episode. Consider a setting where $R_a=\{1,2\}$, $N_1=\{a,b,c,d\}$, $N_2=\{a,d\}$ and advertisers' tCPAs are $4, 5, 2, 1$, respectively. Suppose $a$'s value-per-conversion $v_a$ is higher than $5$, if she submits her true tCPA, she wins an impression only in ad request $2$ and achieves a profit of $v_a-1$ with an actual CPA $1<4$. However, if she submits a tCPA higher than $5$, she will win in both ad requests. In this case, her profit becomes $(v_a-5)+(v_a-1)$ with a CPA $\frac{5+1}{2}<4$. The above example indicates that designing IC delivery mechanisms should consider the correlation between different ad requests, which makes this problem very challenging. 

If advertiser $a$'s value-per-conversion is the same for all ad requests, her utility function (9) can be reformulated as:
\begin{alignat}{1}
(v_a-I_a\cdot cpa_a(t_a'))\cdot cv_a(t_a').
\end{alignat}
Based on this observation, we propose the following double-layer framework, which is illustrated in Figure \ref{double_layer}. The framework consists of two components. The first component (the green part) is in charge of designing the aggregated delivery mechanism $(cpa_a,cv_a)$.
The second component (the brown part) decomposes the aggregated requirement $(cv_a(t_a),cpa_a(t_a))$ into each ad request. Specifically, given a report $t_a'$ and an aggregated mechanism $(cpa_a,cv_a)$, the advertising platform splits $cpa_a(t_a')$ and $cv_a(t_a')$ into fine-grained $\{(cpa_a^j(t_a'),cv_a^j(t_a'))\}$, where $\sum_{j}cpa_a^j(t_a')\cdot cv_a^j(t_a')/\sum_{j}cv_a^j(t_a')=cpa_a(t_a')$ and $\sum_{j}cv_a^j(t_a')=cv_a(t_a')$. It is easy to verify that $\{(cpa_a^j,cv_a^j)\}$ is IC if and only if $(cv_a,cpa_a)$ is IC. 

Next, we characterize a necessary and sufficient condition for IC and IR $(cpa_a,cv_a)$, and analyze possible solutions in two scenarios. In the former scenario, advertiser $a$'s value-per-conversion is unknown to the advertising platform, while this kind of information is known in the latter. On top of that, we introduce a practical algorithm to decompose $(cpa_a,cv_a)$ into each ad request.

\begin{figure}[t]
	\centering
	\includegraphics[width=3in]{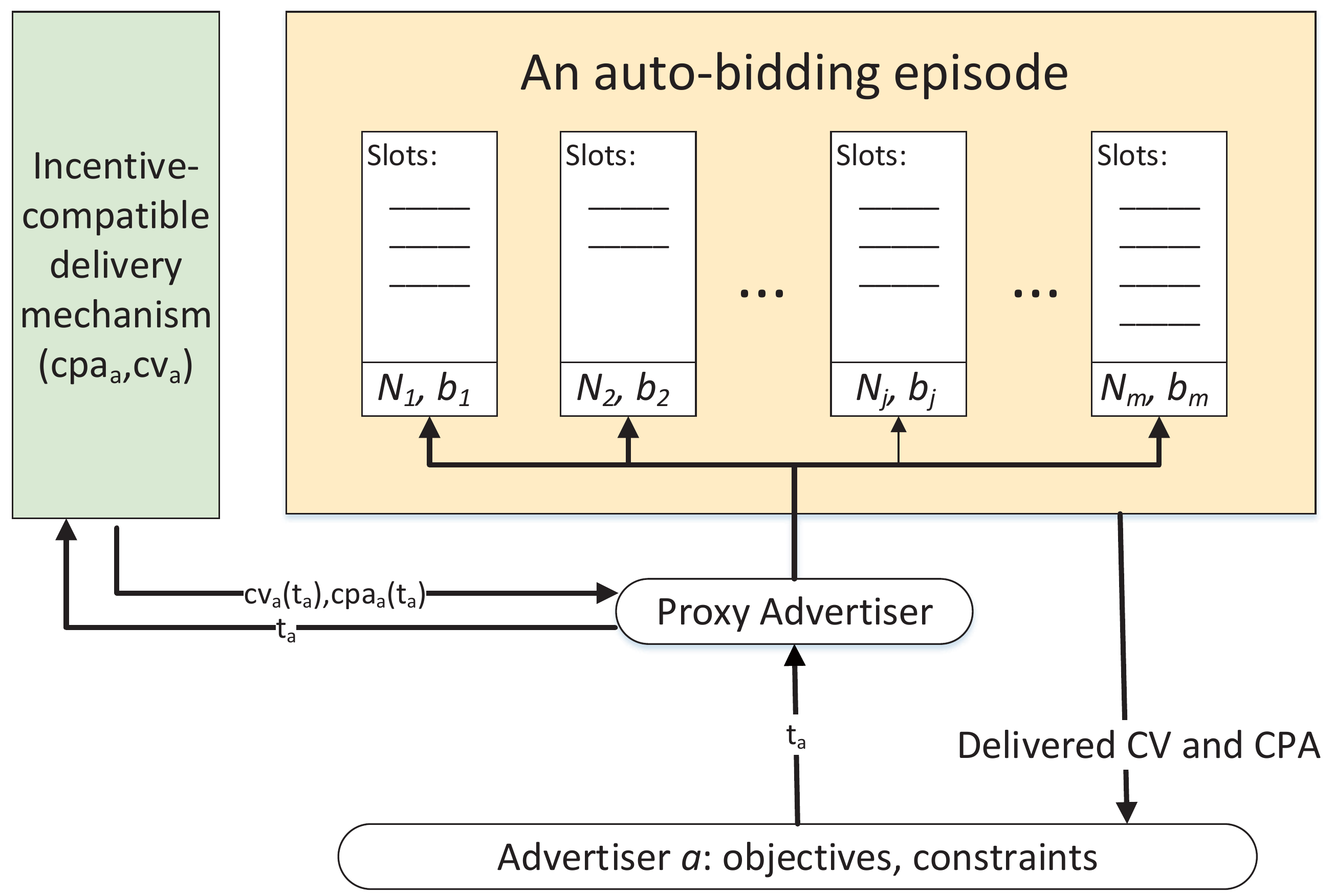}
	\caption{The double-layer framework for auto-bidding advertisers with tCPA expression.}\label{double_layer}
\end{figure}

\subsection{Incentive-Compatible Delivery Mechanisms}
At the beginning of the auto-bidding episode, the advertising platform only learns the advertiser's report $t_a'$. However, the term $v_a$ is also crucial for modeling advertiser $a$. We prove that if $v_a$ is unknown to the platform, only trivial mechanism exists for profit-maximizing advertisers. For the known case, the advertising platform can customize the delivery mechanism based on $v_a$ and other commercial considerations. Theorem \ref{main_result} identifies all $(cv_a,cpa_a)$ that is IC and IR for advertisers with tCPA expression, no matter whether $v_a$ is known or not.
\begin{theorem}\label{main_result}
	The cumulative delivery mechanism $(cpa_a,cv_a)$ is IC and IR if and only if $cpa_a(t_a)=t_a$ and $(v_a-I_a\cdot t_a)\cdot cv_a(t_a)$ is non-decreasing with $t_a$.
\end{theorem}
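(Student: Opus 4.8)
The plan is to work with the reduced, episode-level utility in equation~(11) and to track how the bottom-line penalty enters. Fix the true target $t_a$ and a report $t_a'$. Since reporting $t_a'$ triggers the delivery $(cpa_a(t_a'),cv_a(t_a'))$, the advertiser's realized CPA is $cpa_a(t_a')$, which must be compared against her \emph{true} constraint $t_a$. Thus her utility is $(v_a-I_a\cdot cpa_a(t_a'))\cdot cv_a(t_a')$ when $cpa_a(t_a')\le t_a$ and is strictly negative otherwise (the bottom-line assumption). I would abbreviate $g(t):=(v_a-I_a\cdot cpa_a(t))\cdot cv_a(t)$ for the ``feasible-report'' utility and $W(t):=(v_a-I_a\cdot t)\cdot cv_a(t)$ for the candidate truthful utility; note that $g=W$ exactly when $cpa_a(t)=t$, and that $W(t)>0$ on $(0,v_a)$ since $v_a-I_a t>0$ and $cv_a(t)>0$ in the intensive market.

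For sufficiency, assume $cpa_a(t_a)=t_a$ and $W$ non-decreasing. A report $t_a'$ is feasible for type $t_a$ iff $cpa_a(t_a')=t_a'\le t_a$, in which case it yields $W(t_a')\le W(t_a)$ by monotonicity; an infeasible report $t_a'>t_a$ yields negative utility, dominated by the positive truthful payoff $W(t_a)$. Hence truthful reporting is dominant for every $\{b_{-a}^j\}$, giving IC, and $W(t_a)>0$ gives IR.

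For necessity, suppose $(cpa_a,cv_a)$ is IC and IR. First, IR forces $cpa_a(t_a)\le t_a$ for every $t_a$: otherwise the truthful delivery violates the true constraint and the bottom-line assumption makes $U_a<0$. Next I would obtain monotonicity of $g$ by a mimicking argument: for $t_1<t_2$ we have $cpa_a(t_1)\le t_1<t_2$, so type $t_2$ may report $t_1$ feasibly, and IC gives $g(t_2)\ge g(t_1)$. The main obstacle is promoting $cpa_a(t_a)\le t_a$ to the equality $cpa_a(t_a)=t_a$. The idea is that a slack constraint is exploitable: if $cpa_a(t_0)<t_0$, then by differentiability (hence continuity) of $cpa_a$ there is a report $t'>t_0$ still feasible for type $t_0$, and a more aggressive report should strictly raise delivered conversions and hence $g$, so type $t_0$ would strictly prefer $t'$ to the truth, contradicting IC. Making this rigorous is exactly where the intensive-market hypothesis must be invoked to rule out degenerate deliveries on which $g$ is locally flat (there the advertiser is indifferent and slackness could otherwise survive); I expect this to be the delicate step. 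Once $cpa_a(t_a)=t_a$ is established, $g$ coincides with $W$, so the monotonicity of $g$ derived above is precisely the required non-decreasingness of $(v_a-I_a\cdot t_a)\cdot cv_a(t_a)$, completing the characterization.
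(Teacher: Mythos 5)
Your argument follows the same route as the paper's: sufficiency by observing that with $cpa_a(t_a)=t_a$ the feasible reports are exactly $t_a'\le t_a$ and monotonicity of $W$ makes truth optimal, and necessity by using IR to get $cpa_a(t_a)\le t_a$, a mimicking argument to get monotonicity of the report-utility $g$, and then arguing the constraint must bind. The one step you explicitly leave open --- promoting $cpa_a(t_a)\le t_a$ to $cpa_a(t_a)=t_a$ --- is precisely the step the paper itself handles with a single assertion (``reporting $t_a$ as high as possible is always an optimal strategy \ldots which further implies that $cpa_a(t_a)=t_a$ since $(cpa_a,cv_a)$ is incentive-compatible''), so you have correctly located the soft spot rather than missed an idea the paper supplies.

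Your suspicion that this step is delicate is in fact well founded: under the paper's Definition~2, which only requires the \emph{weak} inequality $U_a(c_a)\ge U_a(c_a')$, the equality cannot be derived from IC and IR plus the intensive-market assumption alone. Take a revenue-maximizing advertiser ($I_a=0$), $cv_a(t)\equiv 1$ and $cpa_a(t)=t/2$; both functions are differentiable and positive, every feasible report yields utility $v_a$ and every infeasible one yields negative utility, so truth-telling is weakly dominant and IR holds, yet $cpa_a(t_a)=t_a/2\ne t_a$. Closing the gap therefore requires an additional strictness ingredient --- e.g.\ that $cv_a$ is strictly increasing, or a tie-breaking convention under which the advertiser exploits any slack $cpa_a(t_0)<t_0$ by moving to a feasible $t'>t_0$ --- and the paper implicitly assumes something of this kind when it says higher reports are ``always optimal.'' So: same decomposition and same lemmas as the paper, with the honest difference that you flag, rather than paper over, the one step where the stated hypotheses do not quite suffice.
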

\begin{proof}
	(``$\Rightarrow$'') If $(cpa_a,cv_a)$ is IC and IR, then advertiser $a$'s utility $(v_a-I_a\cdot cpa_a(t_a))\cdot cv_a(t_a)$ must be non-decreasing in $t_a$. To see this, consider a setting where $t^l<t^h$ but $(v_a-I_a\cdot cpa_a(t^l))\cdot cv_a(t^l)>(v_a-I_a\cdot cpa_a(t^h))\cdot cv_a(t^h)$. Since $cpa_a(t^l)\le t_l< t^h$ (derived from the IR property), advertiser with a tCPA $t^h$ would under-report her constraint to achieve a higher utility. Due to this, we have that reporting $t_a$ as high as possible is always an optimal strategy as long as $cpa_a(t_a')\le t_a$, which further implies that $cpa_a(t_a)=t_a$ since $(cpa_a,cv_a)$ is incentive-compatible. In addition, replacing $cpa_a(t_a)=t_a$ in the utility function, we have that $(v_a-I_a\cdot t_a)\cdot cv_a(t_a)$ should be non-decreasing in $t_a$.
	
	(``$\Leftarrow$'') If $cpa_a(t_a)=t_a$, then $a$'s utility function becomes $(v_a-I_a\cdot t_a)\cdot cv_a(t_a)$. Since $(v_a-I_a\cdot t_a)\cdot cv_a(t_a)$ is non-decreasing with $t_a$ and the CPA delivered in the auto-bidding episode equals $a$'s report, it is best for $a$ to submit $t_a$. Individual rationality is satisfied since $t_a\in (0,v_a)$.
\end{proof}

The above result implies that if the advertising platform wants to acquire advertisers' true tCPA, the delivered CPA in the auto-bidding episode should equal their reports. This also explains why the delivery mechanism in Example \ref{example} is not IC. Given a delivery strategy $cpa_a(t_a)=t_a$, according to Theorem \ref{main_result}, if $cv_a(t_a)$ further guarantees that $a$'s utility is non-decreasing in her report, then $a$'s best strategy is submitting her true tCPA. The problem raised here is whether such a $cv_a$ exists. For a revenue-maximizing advertiser ($I_a=0$), her utility function is $v_a\cdot cv_a(t_a)$. Therefore, $(cpa_a(t_a)=t_a,cv(t_a))$ is IC and IR as long as $cv_a'>0$, no matter whether the advertising platform knows $v_a$ or not. If $a$'s goal is to maximize profit ($I_a=1$), we show that only trivial $cv_a$ works when the platform has no information about $v_a$.
\begin{prop}\label{unknown_value} For a profit-maximizing advertiser $a$, if $v_a$ is unknown to the advertising platform, $cv_a(t_a)=+\infty$ for any incentive-compatible $(cpa_a,cv_a)$.
\end{prop}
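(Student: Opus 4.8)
The plan is to read off the claim directly from Theorem~\ref{main_result}, specialized to a profit-maximizing advertiser ($I_a=1$), and to exploit the fact that the delivery rule must be robust to every admissible value-per-conversion. First I would recall that, by Theorem~\ref{main_result}, any IC and IR cumulative mechanism must satisfy $cpa_a(t_a)=t_a$ together with the requirement that $\phi(t):=(v_a-t)\cdot cv_a(t)$ be non-decreasing in $t$. The conceptual pivot is that when $v_a$ is hidden from the platform, the conversion function $cv_a(\cdot)$ cannot encode $v_a$, yet this monotonicity must still hold for the advertiser's true value. Since the platform must guarantee incentive compatibility against every advertiser it could face, and since every $v_a$ with $v_a>t$ is a legitimate type (because the target obeys $t_a\in(0,v_a)$), a single function $cv_a$ is forced to make $(v-t)\cdot cv_a(t)$ non-decreasing in $t$ simultaneously for all $v>t$.

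With this uniform monotonicity in hand, the remaining step is a short limiting argument. Fix any two feasible reports $t_1<t_2$. Applying the monotonicity to an advertiser of value $v>t_2$ gives $(v-t_1)\cdot cv_a(t_1)\le (v-t_2)\cdot cv_a(t_2)$, i.e. $cv_a(t_2)\ge \frac{v-t_1}{v-t_2}\,cv_a(t_1)$. In an intensive auction market (Definition~\ref{intensive}) we have $cv_a(t_1)>0$, so letting the value shrink to the report, $v\downarrow t_2$, the coefficient $\frac{v-t_1}{v-t_2}$ diverges while $cv_a(t_1)$ stays fixed and positive; hence $cv_a(t_2)=+\infty$. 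As $t_2$ was an arbitrary feasible report, this yields $cv_a\equiv +\infty$, the stated conclusion.

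I expect the main obstacle to be the justification in the first paragraph rather than the calculation: one must argue cleanly that ``$v_a$ unknown'' is precisely what upgrades the single-value monotonicity of Theorem~\ref{main_result} into a condition quantified over all $v_a>t$. Once that quantifier is in place the degeneracy is immediate, because the gap $v-t$ in the denominator can be driven to zero by taking the advertiser's value arbitrarily close to (but above) the report. It is also worth flagging that the blow-up relies on strict positivity $cv_a(t_1)>0$, supplied by the intensive-market assumption; without it the diverging coefficient would not force $cv_a$ to be infinite.
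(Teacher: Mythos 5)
Your proposal is correct and follows essentially the same route as the paper: both arguments reduce to the observation that, with $v_a$ hidden, the monotonicity of $(v_a-t)\cdot cv_a(t)$ from Theorem~\ref{main_result} must hold for every admissible value, and then drive the factor $v-t$ to zero while $cv_a$ stays bounded away from zero by the intensive-market assumption. The only difference is the order of the limit — the paper fixes $v_a$ and sends the report $t_a\to v_a$ to get unboundedness near each value, whereas you fix the report $t_2$ and send $v\downarrow t_2$, which reaches the pointwise conclusion $cv_a(t_2)=+\infty$ slightly more directly; this is a cosmetic rather than substantive distinction.
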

\begin{proof}
	Since $v_a$ is unknown to the advertising platform, the design of $cv_a$ does not depend on $v_a$. If $(cpa_a,cv_a)$ is IC and IR, then reporting $t_a$ is a dominant strategy for all advertiser with any $v_a$ and any $t_a\in (0,v_a)$. Given an incentive-compatible $(cpa_a,cv_a)$, $a$'s profit curve is $U_a^p(t_a)=(v_a-t_a)\cdot cv_a(t_a)$. When $t_a\rightarrow v_a$, the term $v_a-t_a\rightarrow 0$. Incentive compatibility requires that $U_a^p(t_a)$ is non-decreasing in $(0,t_a)$ which implies that $cv_a(t_a)$ should be unbounded when $t_a\rightarrow v_a$. Due to the fact that $cv_a(t_a)>0$ for all $t_a\in (0,v_a)$, we must have $cv_a(t_a)=+\infty$ when $t_a\rightarrow v_a$. Since $v_a\in \mathcal{R}_{+}$, we have $cv_a(t_a)=+\infty$ for all $t_a\in \mathcal{R}_{+}$.
\end{proof}

In practice, the assumption of unknown value-per-conversion is very restrictive. This is because the advertising platform can collect a mass of data when interacting with advertisers. Based on these data and advanced learning techniques, they can estimate the empirical distribution of advertisers' values to make further decisions. In addition, there are many cases, especially for E-commerce advertisers, where the value-per-conversion is public information. For these reasons, $v_a$ can be seen as a prior to the advertising platform. With a known $v_a$, the advertising platform can customize $a$'s conversion function $cv_a$ to achieve IC.
\begin{prop}\label{known_value}
	If $v_a$ is known and $cpa_a(t_a)=t_a$, then $cv_a$ is incentive-compatible if and only if $cv_a(t_a)=\frac{g(t_a)}{v_a-I_a\cdot t_a}$, where $g'\ge 0$ and $g(t_a)>0$.
\end{prop}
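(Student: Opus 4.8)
The plan is to recognize that this proposition is essentially a reparametrization of the characterization already established in Theorem \ref{main_result}. Since the hypothesis fixes $cpa_a(t_a)=t_a$, the individual rationality part is automatic (as $t_a\in(0,v_a)$), and by Theorem \ref{main_result} the only remaining content of incentive compatibility is that the function $(v_a-I_a\cdot t_a)\cdot cv_a(t_a)$ be non-decreasing in $t_a$. So the whole proof reduces to introducing the change of variable $g(t_a):=(v_a-I_a\cdot t_a)\cdot cv_a(t_a)$ and translating the two required properties of $cv_a$ into the two stated properties of $g$.

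First I would verify that the map $cv_a\mapsto g$ is a legitimate bijective reparametrization on the feasible domain. The key observation is that $v_a-I_a\cdot t_a>0$ throughout: if $I_a=0$ this is just $v_a>0$, and if $I_a=1$ it is $v_a-t_a>0$, which holds because $t_a\in(0,v_a)$. Hence, for any given $v_a$, specifying $cv_a$ is equivalent to specifying $g$ via $cv_a(t_a)=g(t_a)/(v_a-I_a\cdot t_a)$, and the positivity constraint $cv_a(t_a)>0$ (from the intensive auction market assumption in Definition \ref{intensive}) is equivalent to $g(t_a)>0$.

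For the forward direction ($\Rightarrow$), I would assume $cv_a$ is incentive-compatible. Under the hypothesis $cpa_a(t_a)=t_a$, Theorem \ref{main_result} gives that $g(t_a)=(v_a-I_a\cdot t_a)\cdot cv_a(t_a)$ is non-decreasing; since $cv_a$ is differentiable (again by Definition \ref{intensive}), so is $g$, and non-decreasingness yields $g'\ge 0$. Positivity $g>0$ follows from $cv_a>0$ together with $v_a-I_a\cdot t_a>0$. Writing $cv_a(t_a)=g(t_a)/(v_a-I_a\cdot t_a)$ then exhibits exactly the claimed form. For the reverse direction ($\Leftarrow$), I would assume $cv_a(t_a)=g(t_a)/(v_a-I_a\cdot t_a)$ with $g'\ge 0$ and $g>0$. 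Then $(v_a-I_a\cdot t_a)\cdot cv_a(t_a)=g(t_a)$ is non-decreasing, and combined with the given $cpa_a(t_a)=t_a$ this is precisely the sufficient condition in Theorem \ref{main_result}, so $cv_a$ is incentive-compatible.

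This argument is short and I do not anticipate a genuine obstacle; the only point requiring care is confirming $v_a-I_a\cdot t_a>0$ on the whole domain so that the substitution and the positivity translation are valid in both the revenue-maximizing ($I_a=0$) and profit-maximizing ($I_a=1$) cases. Everything else is a direct appeal to Theorem \ref{main_result} under the standing differentiability and positivity assumptions of the intensive auction market.
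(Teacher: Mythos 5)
Your proposal is correct and follows essentially the same route as the paper: both invoke Theorem \ref{main_result} to reduce incentive compatibility (given $cpa_a(t_a)=t_a$) to monotonicity of $(v_a-I_a\cdot t_a)\cdot cv_a(t_a)$, and then rename that quantity as $g$ (the paper calls it $U_a$) to read off the stated form of $cv_a$. Your version is somewhat more careful than the paper's one-line argument, in that you explicitly verify $v_a-I_a\cdot t_a>0$ on the domain and spell out both directions of the equivalence, but there is no substantive difference in approach.
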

\begin{proof}
	According to Theorem \ref{main_result}, $a$'s utility function $U_a(t_a)$ is $(v_a-I_a\cdot t_a)\cdot cv_a(t_a)$ and is non-decreasing with $t_a$ provided that $(cpa_a,cv_a)$ is incentive-compatible. Because that $v_a$ is known, we have that $cv_a(t_a)=\frac{U_a(t_a)}{v_a-I_a\cdot t_a}$, where $U_a'\ge 0$ and $U_a(t_a)>0$ for all $t_a\in (0,v_a)$.
\end{proof}

In other words, if $v_a$ is known to the advertising platform, designing an incentive-compatible $(cpa_a,cv_a)$ is equivalent to designing $a$'s utility curve. Besides, given an incentive-compatible $(cpa_a,cv_a)$, the utility of the advertising platform in the auto-bidding episode is $t_a\cdot cv_a(t_a)=\frac{t_a\cdot g(t_a)}{v_a-I_a\cdot t_a}$, which is align with advertiser $a$'s utility $g(t_a)$. In reality, the advertising platform is able to craft $a$' utility curve $U_a$, or equivalently the conversion curve $cv_a$, to benefit the advertisers, the platforms and also the users.

\subsection{Implementation of The Double-Layer Framework}
So far, we have characterized the high-level requirements for advertisers revealing true tCPA constraints. In the following, we explore how to implement these requirements in each ad request.
According to (10), $a$'s utility in the episode is pinned by $cpa_a(t_a)$ and $cv_a(t_a)$, hence any bid sequence $\{b_a^j\}$ that achieves $(cpa_a(t_a),cv_a(t_a))$ is considered as an optimal bidding strategy w.l.o.g. Given an advertiser $a$ and her report $t_a'$, we next introduce two elementary strategies to reach the target $cpa_a(t_a')$ and $cv_a(t_a')$.

{\bf Exterior Controlling Strategy (ECS)}: Every time a new ad request $j\in R_a$ arrives, the advertising platform firstly customizes the bidding environment faced by $\tilde{a}$ through a PID controller, which includes the matched advertiser set $N_j$, the set of available slots $S_j$, etc. The PID controller works for stabilizing the delivered conversions and CPA at the level of $cpa_a(t_a')$ and $cv_a(t_a')$ during the episode, based on the available data at the serving time. In each customized bidding environment, the proxy advertiser $\tilde{a}$ optimizes the bid $b_a^j$ with the objective of maximizing utility.

{\bf Interior Controlling Strategy (ICS)}: Suppose the cumulative conversions and the delivered CPA in the first $j-1$ ad requests are $x$ and $y$, respectively. The proxy advertiser establishes a plan $\{(cpa_a^k(t_a'),cv_a^k(t_a'))\}_{k\ge j}$ for the future impressions so as to the total conversions and CPA reach the target $(cpa_a(t_a'),cv_a(t_a'))$ automatically. For example, $\tilde{a}$ can commit an egality strategy, where the remaining conversions $cv_a(t_a')-x$ and the remaining spending $cv_a(t_a')\cdot cpa_a(t_a')-xy$ are split into each coming ad request $k$ equally, i.e., $cv_a^k(t_a')=\frac{cv_a(t_a')-x}{|R_a|-j+1}$ and $cpa_a^k(t_a')=\frac{cpa_a(t_a')\cdot cv_a(t_a')-xy}{cv_a(t_a')-x}$.
Given a plan $\{(cv_a^k(t_a'),cpa_a^k(t_a'))\}_{k\ge j}$, $\tilde{a}$ will bid with the target $(cpa_a^j(t_a'), cv_a^j(t_a'))$ in ad request $j$.

The two naive methods have their own pros and cons. On the one hand, ECS eases the burden on the proxy advertiser, yet it places too much emphasis on the PID controller. If the auto-bidding lasts long enough, the controller works very well. But in the worst cases, the controller can alter the bidding environment drastically. In these circumstances, the benefits of other advertisers can be harmed and the overall social welfare would be reduced as well. On the other hand, ICS brutally incorporates the incentive-compatible conditions into each ad request without modifying the bidding environment. If everything goes well as planned, the target $(cpa_a(t_a'),cv_a(t_a'))$ will be reached automatically. Nonetheless, if the split strategy is not well crafted, $(cpa_a^j(t_a'),cv_a^j(t_a'))$ may be too strict to be satisfied.

Our solution depicted in Algorithm \ref{impl} is a combination of the above two methods. 
Specifically, in each ad request $j$ we have a target $(tcpa_a^j,tcv_a^j)$ (line 4), which is derived from an interior controlling strategy. Instead of satisfying $(tcpa_a^j,tcv_a^j)$ strictly, the proxy advertiser will optimize the utility function with constraints $|cpa_a^j(b_j)-tcpa_a^j|\le \alpha_j$ and $|cv_a^j(b_j)-tcv_a^j|\le \beta_j$, where $\alpha_j$ and $\beta_j$ are two parameters produced from the interior controlling strategy.
After observing the actual output (line 6), a PID controller will make minor adjustments on the bidding environment of next matched ad request (line 8), ensuring that the bidding process is in the right direction.

\begin{algorithm}[t]
	\begin{small}
		\SetKwInOut{Input}{\textbf{Input}}\SetKwInOut{Output}{\textbf{Output}}
		
		\Input{$\{(cpa_a,cv_a),I_a,t_a\}$} \Output{$\{(cpa_a^j,cv_a^j)\}$}
		\BlankLine
		initialize the delivery set $dy=\{\emptyset\}$\;
		\For{$j\leftarrow 1$ \KwTo $R_a$}{
			estimate others bids $\tilde{b}_{-a}^{j}$ based on $S_j,N_j$ and $j$\;
			compute a target $(tcpa_a^j,tcv_a^j)$ and errors $(\alpha_j,\beta_j)$ according to an interior controlling strategy $ICS(\tilde{b}_{-a}^{j},dy,cpa_a(t_a),cv_a(t_a))$\;
			set a bid $b_a^{j*}$ that solves the following \:
			$$max_{b_a^j}\quad (v_a^j-I_a\cdot cpa_a^j(b_a^j,\tilde{b}_{-a}^{j}))\cdot cv_a^j(b_a^j,\tilde{b}_{-a}^{j})$$
			$$s.t.\quad |cpa_a^j(b_a^j,\tilde{b}_{-a}^{j})-tcpa_a^j|\le \alpha_j$$
			$$\quad |cv_a^j(b_a^j,\tilde{b}_{-a}^{j})-tcv_a^j|\le \beta_j$$\\
			$(cpa_a^j,cv_a^j)=Auction(b_a^{j*},\{b_i^j\}_{i\in N_j})$\;
			$dy=dy\cup \{(cpa_a^j,cv_a^j)\}$\;
			adjust $(S_{j+1},N_{j+1})$ through an exterior controlling strategy 
			$ECS(dy,cpa_a(t_a),cv_a(t_a),j+1)$\;
		}
		\caption{{\small Auto-bidding Strategy for $(cpa_a,cv_a)$}}\label{impl}
	\end{small}
\end{algorithm}

\section{Incentive Mechanisms for Advertisers with Target ROI Constraints}
The previous section studies mechanisms for advertisers with tCPA expression, where advertisers' value-per-conversion is the same for all potential conversions. For this kind of advertisers, designing an incentive-compatible delivery mechanism $\{(cpa_a^j,cv_a^j)\}$ can be reduced to designing an incentive-compatible aggregated mechanism $(cpa_a,cv_a)$. In this section, we investigate a more complex scenario, in which advertisers' value-per-conversion varies with ad requests. For this setting, submitting a target CPA is not sufficient to express advertisers' ROI requirement, and the advertising platform needs to learn their target ROI to optimize the bidding process. 

Given an advertiser $a$ adopting auto-bidding, let $R_a^h\subseteq R_a$ be the ad request set with the same value-per-conversion $v_a^h$, i.e., $v_a^{j_1}=v_a^{j_2}=v_a^h$ for any $j_1,j_2\in R_a^h$. We have $R_a=\cup \{R_a^h\}_{v_a^h\in V_a}$ and $R_a^{h_1}\cap R_a^{h_2}=\emptyset$ for any $v_a^{h_1},v_a^{h_2} (h_1\ne h_2)\in V_a$.
Notice that in the objective function (9), advertiser $a$' value-per-conversion $v_a^h$ is entangled with the term $\sum_{j\in R_a^h} cv_a^j(\gamma_a)$, therefore $a$'s utility function cannot be simplified as clear as (10). In order to characterize strategy-proof delivery mechanisms for advertisers with tROI constraints, the advertising platform has to design each delivery function $(cpa_a^j,cv_a^j)$ carefully.
\begin{prop}\label{sub_result}
	The delivery mechanism $\{(cpa_a^j,cv_a^j)\}$ is IC and IR if and only if $\frac{\sum_{j\in R_a}(v_a^j-cpa_a^j(\gamma_a))\cdot cv_a^j(\gamma_a)}{\sum_{j\in R_a}cpa_a^j(\gamma_a)\cdot cv_a^j(\gamma_a)}=\gamma_a$ and the utility function is non-increasing with $\gamma_a$.
\end{prop}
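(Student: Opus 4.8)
The plan is to mirror the proof of Theorem~\ref{main_result}, transporting its two-part structure (a pinned-down equality together with a monotonicity condition) from the tCPA setting to the tROI setting, but with the direction of every incentive argument reversed: raising the reported target CPA \emph{loosens} the constraint, whereas raising the reported target ROI $\gamma_a'$ \emph{tightens} it, so the advertiser's temptation is now to \emph{under}-report. Throughout I would write $R(\gamma_a') := \frac{\sum_{j}(v_a^j - cpa_a^j(\gamma_a'))\, cv_a^j(\gamma_a')}{\sum_{j} cpa_a^j(\gamma_a')\, cv_a^j(\gamma_a')}$ for the ROI realized in the episode under report $\gamma_a'$, and $U_a(\gamma_a') := \sum_{j}(v_a^j - I_a\, cpa_a^j(\gamma_a'))\, cv_a^j(\gamma_a')$ for the raw utility of the delivery induced by $\gamma_a'$. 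By the bottom-line assumption, a type-$\gamma_a$ advertiser who reports $\gamma_a'$ obtains $U_a(\gamma_a')$ when $R(\gamma_a') \ge \gamma_a$ and a strictly negative payoff otherwise; this observation is the engine for both directions.

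For the sufficiency direction ($\Leftarrow$), suppose $R(\gamma_a) = \gamma_a$ for every feasible $\gamma_a$ and $U_a(\cdot)$ is non-increasing. IR is immediate: truthful reporting realizes ROI exactly $\gamma_a \ge \gamma_a$, so no penalty applies, and the identity $\sum_j (v_a^j - cpa_a^j)\,cv_a^j = \gamma_a \sum_j cpa_a^j\, cv_a^j \ge 0$ (together with $v_a^j, cv_a^j > 0$ when $I_a = 0$) gives $U_a(\gamma_a) \ge 0$. For IC, I would use the equality $R(\gamma_a') = \gamma_a'$ to reduce feasibility of a report to the elementary condition $\gamma_a' \ge \gamma_a$: a report $\gamma_a' > \gamma_a$ is feasible but, by monotonicity, yields $U_a(\gamma_a') \le U_a(\gamma_a)$; a report $\gamma_a' < \gamma_a$ realizes ROI $\gamma_a' < \gamma_a$, trips the bottom-line, and returns a negative payoff. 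Hence truth-telling is weakly dominant.

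For the necessity direction ($\Rightarrow$), I would first extract $R(\gamma_a) \ge \gamma_a$ for every type directly from IR, since a truthful report realizing ROI below $\gamma_a$ would incur a negative payoff. Next I would show $U_a(\cdot)$ is non-increasing by an exchange argument: if $\gamma^l < \gamma^h$ but $U_a(\gamma^l) < U_a(\gamma^h)$, then a type-$\gamma^l$ advertiser could report $\gamma^h$, which is feasible because $R(\gamma^h) \ge \gamma^h > \gamma^l$, and would strictly gain, contradicting IC. Finally, to force the equality $R(\gamma_a) = \gamma_a$ I would argue by contradiction: if $R(\gamma_a) > \gamma_a$, then continuity of $R$ (guaranteed by the intensive-market assumption of Definition~\ref{intensive}) provides a report $\gamma_a' < \gamma_a$ that still satisfies $R(\gamma_a') \ge \gamma_a$, and the non-increasing $U_a$ makes this downward deviation at least as profitable, contradicting IC.

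The step I expect to be the main obstacle is precisely this last one: converting the weak bound $R(\gamma_a) \ge \gamma_a$ into the sharp equality, i.e.\ pinning the realized ROI to the report rather than merely bounding it. The exchange argument and the whole sufficiency direction use only monotonicity and the bottom-line, but the equality requires combining continuity of $R$ with the strict preference created by $U_a$ along its decreasing range. I would need to be careful about flat segments of $U_a$, where weak incentive compatibility alone does not exclude $R(\gamma_a) > \gamma_a$; there I would note that the characterization identifies the pinned-down representative mechanism, exactly as the equality $cpa_a(t_a)=t_a$ functions in Theorem~\ref{main_result}.
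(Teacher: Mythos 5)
Your proposal is correct and follows essentially the same route as the paper, which itself proves this proposition as a direct transport of the argument for Theorem~\ref{main_result}: IR pins the realized ROI weakly above the report, an exchange deviation to a higher (feasible) report forces the utility to be non-increasing, and incentive compatibility then pins the realized ROI to equal the report, with the converse being immediate. Your closing caveat about flat segments of $U_a$ is a real subtlety that the paper's own one-paragraph proof glosses over in exactly the same way it does for Theorem~\ref{main_result}, so you are, if anything, more careful than the source.
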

\begin{proof}
	The proof is similar to that for Theorem \ref{main_result}. Firstly, advertiser $a$'s utility should be non-increasing with her report, otherwise she has incentives to submit a higher tROI. Therefore, in order to incentivize $a$ to act truthfully, the ROI delivered in the episode needs to equal her report. In addition, if the actual ROI achieved in auto-bidding is $a$'s report and her utility is also non-increasing in her report, then it is a dominant strategy for $a$ to submit her true tROI constraint.
\end{proof}

Although Proposition \ref{sub_result} gives a characterization of IC and IR delivery mechanisms, it is not trivial for the advertising platform to identify one practical solution, as each $(cpa_a^j,cv_a^j)$ should be designed together. Since the ad request set $R_a^h$ constitutes a short-term auto-bidding episode with the same value-per-conversion $v_a^h$, we next provide two families of delivery mechanisms that are easy to implement in reality. The first one is straightforward: divide the whole market $R_a$ into each sub-market $R_a^h$ and design incentive-compatible delivery mechanisms for each of them separately. Let $(cpa_{a^h},cv_{a^h})$ be the aggregated delivery mechanism in sub-market $R_a^h$, that is $cpa_{a^h}(\gamma_a)=\sum_{j\in R_a^h}cpa_a^j(\gamma_a)\cdot cv_a^j(\gamma_a)/\sum_{j\in R_a^h}cv_a^j(\gamma_a)$ and $cv_{a^h}(\gamma_a)=\sum_{j\in R_a^h}cv_a^j(\gamma_a)$.
\begin{corollary}\label{submarket_gamma}
	For each sub-market $R_a^h$, $(cpa_{a^h},cv_{a^h})$ is IC and IR if and only if $cpa_{a^h}(\gamma_a)=\frac{v_a^h}{1+\gamma_a}$ and $\frac{1+\gamma_a-I_a}{1+\gamma_a}\cdot cv_{a^h}(\gamma_a)$ is non-increasing with $\gamma_a$.
\end{corollary}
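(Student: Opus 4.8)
The plan is to reduce the statement to Theorem~\ref{main_result} through a change of variable. The key observation is that inside a single sub-market $R_a^h$ every conversion carries the identical value $v_a^h$, so $R_a^h$ is exactly a tCPA environment of the kind treated in Section~3. By the tROI--tCPA equivalence~(6), imposing a target ROI $\gamma_a$ on $R_a^h$ is the same as imposing the target CPA $t_a^h := \frac{v_a^h}{1+\gamma_a}$. The map $\gamma_a \mapsto t_a^h$ is a strictly decreasing bijection from the feasible range $\gamma_a \in (0,\infty)$ onto the interval $t_a^h \in (0, v_a^h)$, so reporting a tROI carries exactly the same information as reporting the induced tCPA: each deviation $\gamma_a' \neq \gamma_a$ corresponds to one and only one deviation $t_a^{h\prime} \neq t_a^h$. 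Consequently, $(cpa_{a^h}, cv_{a^h})$ regarded as functions of $\gamma_a$ is IC (resp.\ IR) if and only if the same pair regarded as functions of $t_a^h$ is IC (resp.\ IR).

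First I would invoke Theorem~\ref{main_result} for the sub-market mechanism $(cpa_{a^h}, cv_{a^h})$ in the tCPA variable $t_a^h$, with value-per-conversion $v_a^h$. This directly yields that the mechanism is IC and IR if and only if $cpa_{a^h}(t_a^h) = t_a^h$ and $(v_a^h - I_a\cdot t_a^h)\cdot cv_{a^h}(t_a^h)$ is non-decreasing in $t_a^h$.

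Next I would rewrite both conditions in the $\gamma_a$ variable. Substituting $t_a^h = \frac{v_a^h}{1+\gamma_a}$ into the first condition gives $cpa_{a^h}(\gamma_a) = \frac{v_a^h}{1+\gamma_a}$, which is the first stated requirement. For the second, a short computation gives $v_a^h - I_a\cdot t_a^h = v_a^h\cdot \frac{1+\gamma_a - I_a}{1+\gamma_a}$, so the utility becomes $v_a^h\cdot \frac{1+\gamma_a - I_a}{1+\gamma_a}\cdot cv_{a^h}(\gamma_a)$; discarding the positive constant $v_a^h$ leaves $\frac{1+\gamma_a - I_a}{1+\gamma_a}\cdot cv_{a^h}(\gamma_a)$. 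Since $\gamma_a \mapsto t_a^h$ is order-reversing, monotone non-decrease in $t_a^h$ is equivalent to monotone non-increase in $\gamma_a$, which is exactly the second stated requirement.

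The step I expect to demand the most care is the change of variable itself: I must confirm that $\gamma_a \leftrightarrow t_a^h$ is an order-reversing bijection on the feasible ranges, so that (i) the sets of admissible reports and of deviations correspond exactly and incentive compatibility is genuinely preserved, and (ii) the direction of monotonicity flips correctly when passing from $t_a^h$ to $\gamma_a$. Verifying that the positive factor $v_a^h$ may be dropped without affecting monotonicity, and that IR is inherited from Theorem~\ref{main_result} because $t_a^h \in (0, v_a^h)$, is then routine. Finally, I would note that the resulting per-sub-market conditions are consistent with Proposition~\ref{sub_result}: the first forces each sub-market, and hence the whole episode, to deliver ROI exactly $\gamma_a$, while summing the per-sub-market utilities (all sharing the single report $\gamma_a$) preserves the required global monotonicity.
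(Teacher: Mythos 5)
Your proof is correct, but it routes through a different prior result than the paper does. The paper derives the corollary directly from Proposition~\ref{sub_result} (the tROI characterization): since all conversions in $R_a^h$ share the value $v_a^h$, the requirement that the delivered ROI equal the report collapses to $cpa_{a^h}(\gamma_a)=\frac{v_a^h}{1+\gamma_a}$, and substituting this into the utility gives $v_a^h\cdot\frac{1+\gamma_a-I_a}{1+\gamma_a}\cdot cv_{a^h}(\gamma_a)$, which must be non-increasing. You instead invoke Theorem~\ref{main_result} (the tCPA characterization) after reparametrizing the report via the order-reversing bijection $t_a^h=\frac{v_a^h}{1+\gamma_a}$ of equation~(6). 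The two arguments are close mirror images --- Proposition~\ref{sub_result} is itself proved ``similarly to Theorem~\ref{main_result}'' --- but your route has a concrete advantage: it rests the corollary on the one result in the paper with a fully written-out proof, and it makes the tROI--tCPA equivalence do the work explicitly, including the sign flip in the monotonicity condition, which the paper's proof leaves implicit. The cost is the extra bookkeeping you correctly flag: one must check that the report spaces correspond bijectively so that deviations map to deviations (otherwise IC would not transfer), and that the bijection is order-reversing so that ``non-decreasing in $t_a^h$'' becomes ``non-increasing in $\gamma_a$.'' Both checks go through, and your observation that the positive constant $v_a^h$ can be discarded matches the paper's final step exactly.
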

\begin{proof}
	For sub-market $R_a^h$, advertiser $a$'s value-per-conversion is $v_a^h$ for all conversions. Therefore, $a$'s utility function in $R_a^h$ can be expressed as $(v_a^h-I_a\cdot cpa_{a^h}(\gamma_a))\cdot cv_{a^h}(\gamma_a)$. According to Proposition \ref{sub_result}, we have $cpa_{a^h}(\gamma_a)=\frac{v_a^h}{1+\gamma_a}$ and now $a$'s utility function becomes $v_a^h\cdot\frac{ 1+\gamma_a-I_a}{1+\gamma_a}\cdot cv_a^h(\gamma_a)$. Since $a$'s utility function should be non-increasing with $\gamma_a$, we get the corollary.
\end{proof}

Since $(cpa_{a^h},cv_{a^h})$ is an aggregated delivery mechanism in sub-market $R_a^h$, it can be easily carried out by approaches developed for the tCPA setting. If $(cpa_{a^h},cv_{a^h})$ is IC and IR for sub-market $R_a^h$, we have that $\{(cpa_{a^h},cv_{a^h})\}$ is IC and IR for the auto-bidding episode.
\begin{theorem}\label{combin_solution}
	Given a delivery mechanism $\{(cpa_{a^h},cv_{a^h})\}$, if $(cpa_{a^h},cv_{a^h})$ is IC and IR in each sub-market $R_a^h\in R_a$, then $\{(cpa_{a^h},cv_{a^h})\}$ is IC and IR in $R_a$.
\end{theorem}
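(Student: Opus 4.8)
The plan is to verify the two conditions of Proposition~\ref{sub_result} directly for the combined mechanism $\{(cpa_{a^h},cv_{a^h})\}$ over the whole set $R_a$, and then invoke that characterization rather than re-deriving incentive compatibility from scratch. The structural fact I would lean on throughout is that the advertiser submits a single report $\gamma_a'$ that is reused across every sub-market, so the strategy space stays one-dimensional and the episode-level quantities decompose additively over the partition $R_a=\cup_h R_a^h$.

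First I would rewrite the episode-level numerator, denominator, and utility as sums over sub-markets. Since $v_a^j=v_a^h$ for every $j\in R_a^h$, and the aggregated mechanism satisfies $\sum_{j\in R_a^h}cpa_a^j cv_a^j = cpa_{a^h}cv_{a^h}$ and $\sum_{j\in R_a^h}cv_a^j = cv_{a^h}$ by definition, the conversion-weighted profit and cost split cleanly as $\sum_h (v_a^h-cpa_{a^h}(\gamma_a'))cv_{a^h}(\gamma_a')$ and $\sum_h cpa_{a^h}(\gamma_a')cv_{a^h}(\gamma_a')$, and the episode utility becomes $\sum_h (v_a^h-I_a\cdot cpa_{a^h}(\gamma_a'))cv_{a^h}(\gamma_a')=\sum_h U_{a^h}(\gamma_a')$.

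Next I would check the two conditions of Proposition~\ref{sub_result}. Because each sub-market is IC and IR by hypothesis, Corollary~\ref{submarket_gamma} gives $cpa_{a^h}(\gamma_a')=v_a^h/(1+\gamma_a')$ for every $h$, so each sub-market individually realizes ROI exactly $(v_a^h-cpa_{a^h})/cpa_{a^h}=\gamma_a'$; since the episode ROI is a cost-weighted average of these identical sub-market ROIs, it equals $\gamma_a'$ as well, matching $\gamma_a$ at the truthful report. For the monotonicity condition I would substitute the CPA formula to write the episode utility as $\sum_h v_a^h\cdot\frac{1+\gamma_a'-I_a}{1+\gamma_a'}cv_{a^h}(\gamma_a')$; Corollary~\ref{submarket_gamma} makes each factor $\frac{1+\gamma_a'-I_a}{1+\gamma_a'}cv_{a^h}(\gamma_a')$ non-increasing in the report, and a non-negatively weighted sum of non-increasing functions is non-increasing. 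Both conditions of Proposition~\ref{sub_result} then hold, yielding IC; IR follows since $U_a(\gamma_a)=\sum_h U_{a^h}(\gamma_a)\ge 0$ with each summand non-negative by sub-market IR (indeed strictly positive in an intensive market, Definition~\ref{intensive}).

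The step I expect to be the genuine obstacle is the ROI condition. IC and IR are defined through the episode-level ROI, which is a ratio of sums and therefore a priori couples the sub-markets, so it is not obvious that per-sub-market incentive guarantees survive aggregation into one ratio constraint. The resolution — and the heart of the argument — is that the tight CPA form $cpa_{a^h}(\gamma_a')=v_a^h/(1+\gamma_a')$ forces every sub-market to realize the \emph{same} ROI $\gamma_a'$, which pins the aggregate ratio to $\gamma_a'$ independently of the relative conversion volumes $cv_{a^h}$ across sub-markets; once this common-ROI observation is in hand, the monotonicity and individual-rationality parts reduce to routine additivity over $h$.
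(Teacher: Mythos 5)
Your proposal is correct and follows essentially the same route as the paper's own proof: both extract the tight CPA form $cpa_{a^h}(\gamma_a)=\frac{v_a^h}{1+\gamma_a}$ from per-sub-market incentive compatibility (via Corollary~\ref{submarket_gamma}), observe that this forces every sub-market and hence the cost-weighted aggregate to realize ROI exactly $\gamma_a$, note that a sum of non-increasing sub-market utilities is non-increasing, and then invoke Proposition~\ref{sub_result}. Your write-up merely spells out the additive decomposition and the common-ROI observation in more detail than the paper does.
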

\begin{proof}
	If $(cpa_{a^h},cv_{a^h})$ is incentive-compatible in $R_a^h$, we have $cpa_{a^h}(\gamma_a)=\frac{v_a^h}{1+\gamma_a}$ and the actual ROI delivered in the episode equals exactly $\gamma_a$. Since $a$'s utility in each sub-market $R_a^h$ is non-increasing with $\gamma_a$, it is also true for the cumulative utility of the auto-bidding episode. According to Proposition \ref{sub_result}, the delivery mechanism $\{(cpa_{a^h},cv_{a^h})\}$ is incentive-compatible and individually rational.
\end{proof}

The first solution is simple and intuitive, and is applicable for performance advertisers with homogeneous products, such as online booksellers. However, each $R_a^h$ is treated as an independent market. In some scenarios, advertiser $a$ would like to launch a campaign that links each sub-market to balance the inventories of different products, for example her needs may look like ``I want more conversions in sub-markets with a higher value-per-conversion.'' or ``I want to maintain a certain percentage of conversions in two specific sub-markets''. For this kind of requirements, the interdependencies of delivery mechanisms in different sub-markets should be considered. 
In order to satisfy advertisers' needs of this kind, we propose another class of delivery mechanisms. Our solution starts with a decomposition of an aggregated conversion function $cv_a$, which is used to disentangle $v_a^j$ and $cv_a^j$. Then, we select out proper $cv_a$ to achieve incentive compatibility.
\begin{defn}[Linear Decomposition of $cv_a$]
	A linear decomposition of $cv_a$ is defined by a set of sub functions $\{k_h\cdot cv_a\}_{v_a^h\in V_a}$ where $k_h\ge 0$ and $\sum k_h=1$.
\end{defn}
A linear decomposition decouples the cumulative conversion function $cv_a(\gamma_a)$ into each sub-market. The parameter $k_h$ is preset according to advertiser $a$'s needs. 
Suppose the delivery mechanism in each sub-market $R_a^h$ is $(cpa_a^h,k_h\cdot cv_a)$, our next result identifies all IC delivery mechanisms.
\begin{theorem}\label{various_values}
	Given a cumulative conversion function $cv_a$ and a linear decomposition $\{k_h\cdot cv_a\}$, if $\sum_{h}v_a^h \cdot k_h=(1+\gamma_a)\cdot \sum_{h}cpa_a^h(\gamma_a)\cdot k_h$ and $\frac{1+\gamma_a-I_a}{1+\gamma_a}\cdot cv_a(\gamma_a)$ is non-increasing with $\gamma_a$, then the delivery mechanism $\{(cpa_a^h,k_h\cdot cv_a)\}$ is IC and IR.
\end{theorem}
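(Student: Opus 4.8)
The plan is to reduce the claim to Proposition \ref{sub_result}, which states that a delivery mechanism is IC and IR exactly when the ROI realized over the whole episode equals the reported $\gamma_a$ and the cumulative utility is non-increasing in $\gamma_a$. So I would verify these two conditions for the mechanism $\{(cpa_a^h, k_h\cdot cv_a)\}$. The enabling move is to regroup every per-request sum $\sum_{j\in R_a}$ as a sum over sub-markets $\sum_h\sum_{j\in R_a^h}$, exploit that $v_a^j=v_a^h$ is constant on $R_a^h$, and use the linear decomposition together with the definition of the conversion-weighted aggregated CPA to write $\sum_{j\in R_a^h}cv_a^j(\gamma_a)=k_h\,cv_a(\gamma_a)$ and $\sum_{j\in R_a^h}cpa_a^j(\gamma_a)\,cv_a^j(\gamma_a)=cpa_a^h(\gamma_a)\,k_h\,cv_a(\gamma_a)$. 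After this regrouping the common factor $cv_a(\gamma_a)$ appears in every term.

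For the ROI condition, I would aggregate the numerator and denominator of the expression in Proposition \ref{sub_result} over sub-markets. Using the two identities above, the numerator becomes $cv_a(\gamma_a)\big(\sum_h k_h v_a^h-\sum_h k_h\,cpa_a^h(\gamma_a)\big)$ and the denominator becomes $cv_a(\gamma_a)\sum_h k_h\,cpa_a^h(\gamma_a)$. The shared factor $cv_a(\gamma_a)$ cancels, since it is strictly positive in an intensive auction market, leaving a ratio that equals $\gamma_a$ precisely when $\sum_h k_h v_a^h=(1+\gamma_a)\sum_h k_h\,cpa_a^h(\gamma_a)$, which is exactly the first hypothesis of the theorem.

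For the monotonicity condition, I would write the cumulative utility as $cv_a(\gamma_a)\big(\sum_h k_h v_a^h-I_a\sum_h k_h\,cpa_a^h(\gamma_a)\big)$ by the same regrouping. The key simplification is to substitute the first hypothesis in the form $\sum_h k_h\,cpa_a^h(\gamma_a)=\frac{1}{1+\gamma_a}\sum_h k_h v_a^h$, which collapses the utility to $\big(\sum_h k_h v_a^h\big)\cdot\frac{1+\gamma_a-I_a}{1+\gamma_a}\cdot cv_a(\gamma_a)$. Since $\sum_h k_h v_a^h$ is a positive constant independent of $\gamma_a$, the utility is non-increasing in $\gamma_a$ if and only if $\frac{1+\gamma_a-I_a}{1+\gamma_a}\cdot cv_a(\gamma_a)$ is, which is the second hypothesis; individual rationality is immediate because all three factors are non-negative for $I_a\in\{0,1\}$ and $\gamma_a\ge 0$. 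Both conditions of Proposition \ref{sub_result} then hold, so the mechanism is IC and IR.

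I expect the only subtle point to be the bookkeeping in the regrouping step, in particular correctly identifying the conversion-weighted aggregated CPA $cpa_a^h$ so that the factor $cv_a(\gamma_a)$ factors out cleanly from both numerator and denominator; once that common factor cancels, the substitution of the first hypothesis to collapse the utility is routine. A minor point to state carefully is that $\sum_h k_h v_a^h$ does not depend on $\gamma_a$, which is precisely what lets the monotonicity of the utility be governed entirely by the second hypothesis.
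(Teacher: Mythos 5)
Your proposal is correct and follows essentially the same route as the paper's proof: reduce to Proposition \ref{sub_result}, verify that the delivered ROI equals $\gamma_a$ by regrouping the per-request sums over sub-markets, and collapse the cumulative utility to $\bigl(\sum_h k_h v_a^h\bigr)\cdot\frac{1+\gamma_a-I_a}{1+\gamma_a}\cdot cv_a(\gamma_a)$ with $\sum_h k_h v_a^h$ a positive constant. The only difference is that you spell out explicitly the cancellation the paper labels ``easy to verify.''
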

\begin{proof}
	Given a delivery mechanism $\{(cpa_a^h,k_h\cdot cv_a)\}$, suppose $\sum_{h}v_a^h\cdot k_h=(1+\gamma_a)\cdot \sum_{h}cpa_a^h(\gamma_a)\cdot k_h$ and $\frac{1+\gamma_a-I_a}{1+\gamma_a}\cdot cv_a(\gamma_a)$ is non-increasing with $\gamma_a$. If $(cpa_a^h(\gamma_a),cv_{a^h}(\gamma_a))$ is implemented in each sub-market $R_a^h$, it is easy to verify that the actual ROI delivered in the auto-bidding episode is exactly $\gamma_a$ and $a$'s cumulative utility can be expressed as $\frac{1+\gamma_a-I_a}{1+\gamma_a}\cdot cv_a(\gamma_a) \cdot \sum_{h}v_a^h\cdot k_h$. Notice that $\sum_{h}v_a^h\cdot k_h$ is a positive constant, and hence advertiser $a$'s utility is non-increasing with her report. According to Proposition \ref{sub_result}, the delivery mechanism $\{(cpa_a^h,k_h\cdot cv_a)\}$ is incentive-compatible and individually rational.
\end{proof}
We emphasize that although $\{(cpa_{a^h},k_h\cdot cv_a)\}$ is IC and IR in the auto-bidding episode, it does not guarantee that each $(cpa_{a^h},k_h\cdot cv_a)$ is IC and IR in $R_a^h$. For instance, the delivery mechanism $\{(\frac{\sum_{h}v_a^h k_h}{1+\gamma_a},k_h\cdot cv_a)\}$ is IC for $R_a$ while $(\frac{\sum_{h}v_a^h k_h}{1+\gamma_a},k_h\cdot cv_a)$ is generally not IC for $R_a^h$. In addition, the two solutions proposed in Theorem \ref{combin_solution} and Theorem \ref{various_values} can work together for advertisers with more complex needs.

\section{Discussion and Conclusions}
\begin{figure}[t]
	\centering
	\includegraphics[width=3.3in]{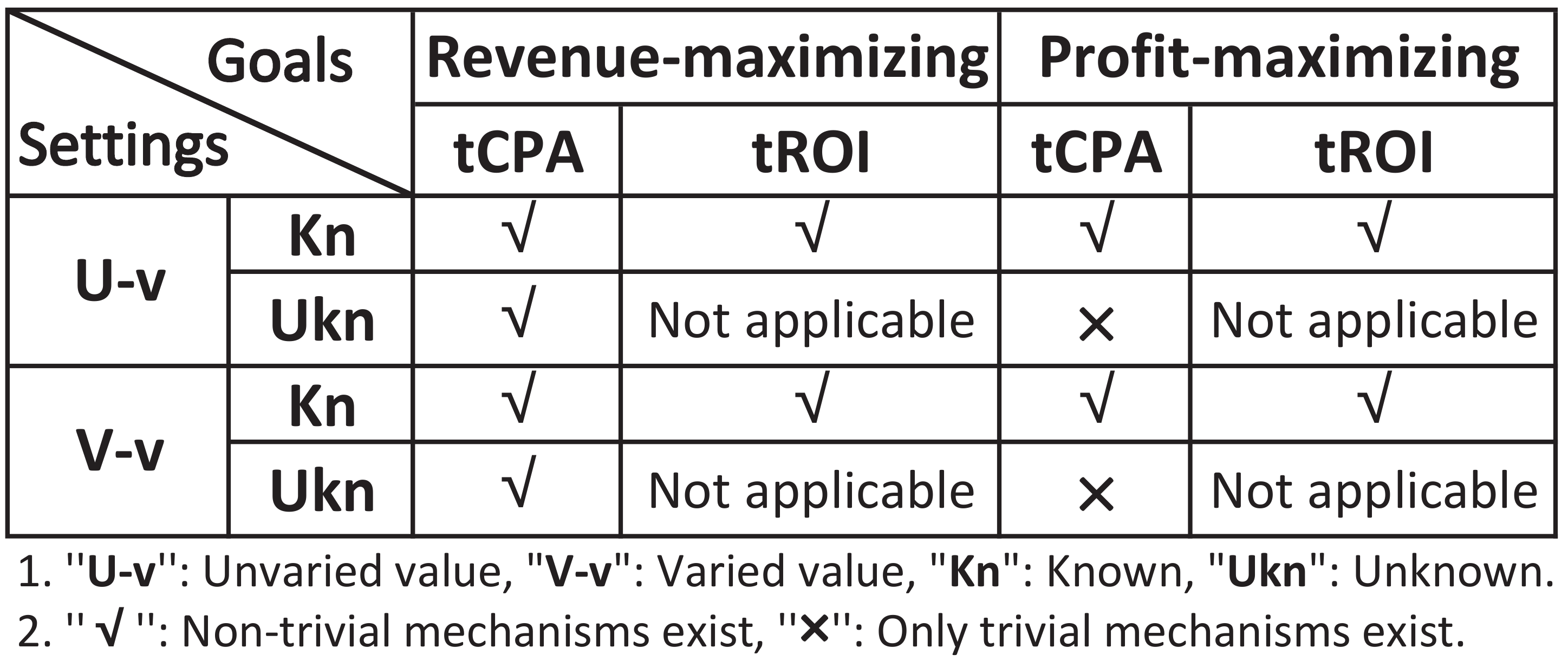}
	\caption{A panorama of IC and IR delivery mechanisms for ROI-constrained advertisers.}\label{results}
\end{figure}
Based on massive amount of data and advanced AI techniques, auto-bidding can provide better service for both the advertisers and the users. In order to optimize the auto-bidding process, advertisers are asked to provide their true demands to the platforms. But they have no incentives to follow the rule unless their own goals are met. To eliminate strategic behaviors of advertisers, the auto-bidding strategy and the incentive mechanism should be co-designed. This paper is the first research that considers this problem, which not only builds a general framework to study the incentive mechanisms, but also provides several practical solutions for various types of advertisers. One important future work is the study of dynamics of delivery mechanisms, including real-time strategy adjustment, convergence analysis, etc. Another interesting setting is that the utility function involves a cost term that is increasing in the constraint violation. It is also worth investigating how different delivery mechanisms affect the auto-bidding strategy.
\bibliographystyle{named}
\bibliography{manuscript}

\begin{thebibliography}{}

\bibitem[\protect\citeauthoryear{{Agarwal} \bgroup \em et al.\egroup
  }{2014}]{agarwal2014budget}
Deepak {Agarwal}, Souvik {Ghosh}, Kai {Wei}, and Siyu {You}.
\newblock Budget pacing for targeted online advertisements at linkedin.
\newblock In {\em Proceedings of the 20th ACM SIGKDD International Conference
  on Knowledge Discovery and Data Mining}, pages 1613--1619, 2014.

\bibitem[\protect\citeauthoryear{{Aggarwal} \bgroup \em et al.\egroup
  }{2019}]{aggarwal2019autobidding}
Gagan {Aggarwal}, Ashwinkumar {Badanidiyuru}, and Aranyak {Mehta}.
\newblock Autobidding with constraints.
\newblock In {\em International Conference on Web and Internet Economics},
  pages 17--30, 2019.

\bibitem[\protect\citeauthoryear{{Auerbach} \bgroup \em et al.\egroup
  }{2008}]{auerbach2008an}
Jason {Auerbach}, Joel {Galenson}, and Mukund {Sundararajan}.
\newblock An empirical analysis of return on investment maximization in
  sponsored search auctions.
\newblock In {\em Proceedings of the 2nd International Workshop on Data Mining
  and Audience Intelligence for Advertising}, pages 1--9, 2008.

\bibitem[\protect\citeauthoryear{{Borgs} \bgroup \em et al.\egroup
  }{2007}]{borgs2007dynamics}
Christian {Borgs}, Jennifer {Chayes}, Nicole {Immorlica}, Kamal {Jain}, Omid
  {Etesami}, and Mohammad {Mahdian}.
\newblock Dynamics of bid optimization in online advertisement auctions.
\newblock In {\em Proceedings of the 16th International Conference on World
  Wide Web}, pages 531--540, 2007.

\bibitem[\protect\citeauthoryear{{Broder} \bgroup \em et al.\egroup
  }{2011}]{broder2011bid}
Andrei {Broder}, Evgeniy {Gabrilovich}, Vanja {Josifovski}, George
  {Mavromatis}, and Alex {Smola}.
\newblock Bid generation for advanced match in sponsored search.
\newblock In {\em Proceedings of the 4th ACM International Conference on Web
  Search and Data Mining}, pages 515--524, 2011.

\bibitem[\protect\citeauthoryear{{Cary} \bgroup \em et al.\egroup
  }{2007}]{cary2007greedy}
Matthew {Cary}, Aparna {Das}, Ben {Edelman}, Ioannis {Giotis}, Kurtis
  {Heimerl}, Anna~R. {Karlin}, Claire {Mathieu}, and Michael {Schwarz}.
\newblock Greedy bidding strategies for keyword auctions.
\newblock In {\em Proceedings of the 8th ACM Conference on Electronic
  Commerce}, pages 262--271, 2007.

\bibitem[\protect\citeauthoryear{{Che} and {Gale}}{2000}]{che2000the}
Yeon-Koo {Che} and Ian~L. {Gale}.
\newblock The optimal mechanism for selling to a budget-constrained buyer.
\newblock {\em Journal of Economic Theory}, 92(2):198--233, 2000.

\bibitem[\protect\citeauthoryear{{Dar} \bgroup \em et al.\egroup
  }{2009}]{dar2009bid}
Eyal~Even {Dar}, Vahab~S. {Mirrokni}, S.~{Muthukrishnan}, Yishay {Mansour}, and
  Uri {Nadav}.
\newblock Bid optimization for broad match ad auctions.
\newblock In {\em Proceedings of the 18th International Conference on World
  Wide Web}, pages 231--240, 2009.

\bibitem[\protect\citeauthoryear{{Facebook Business Help
  Center}}{2020}]{facebook:auto}
{Facebook Business Help Center}.
\newblock About bid strategies.
\newblock
  \url{https://www.facebook.com/business/help/223852498347426?id=2393014447396453},
  2020.
\newblock [Online; accessed 10-September-2020].

\bibitem[\protect\citeauthoryear{{Feldman} \bgroup \em et al.\egroup
  }{2007}]{feldman2007budget}
Jon {Feldman}, S~{Muthukrishnan}, Martin {Pal}, and Cliff {Stein}.
\newblock Budget optimization in search-based advertising auctions.
\newblock In {\em Proceedings of the 8th ACM Conference on Electronic
  Commerce}, pages 40--49, 2007.

\bibitem[\protect\citeauthoryear{{Goel} and {Munagala}}{2009}]{goel2009hybrid}
Ashish {Goel} and Kamesh {Munagala}.
\newblock Hybrid keyword search auctions.
\newblock In {\em Proceedings of the 18th International Conference on World
  Wide Web}, pages 221--230, 2009.

\bibitem[\protect\citeauthoryear{{Google Ads Help}}{2020}]{google:auto}
{Google Ads Help}.
\newblock About automated bidding.
\newblock \url{https://support.google.com/google-ads/answer/2979071?hl=en},
  2020.
\newblock [Online; accessed 10-September-2020].

\bibitem[\protect\citeauthoryear{{Heymann}}{2019}]{heymann2019cost}
Benjamin {Heymann}.
\newblock Cost per action constrained auctions.
\newblock In {\em Proceedings of the 14th Workshop on the Economics of
  Networks, Systems and Computation}, 2019.

\bibitem[\protect\citeauthoryear{{Karande} \bgroup \em et al.\egroup
  }{2013}]{karande2013optimizing}
Chinmay {Karande}, Aranyak {Mehta}, and Ramakrishnan {Srikant}.
\newblock Optimizing budget constrained spend in search advertising.
\newblock In {\em Proceedings of the 6th ACM International Conference on Web
  Search and Data Mining}, pages 697--706, 2013.

\bibitem[\protect\citeauthoryear{{Laffont} and
  {Robert}}{1996}]{laffont1996optimal}
Jean-Jacques {Laffont} and Jacques {Robert}.
\newblock Optimal auction with financially constrained buyers.
\newblock {\em Economics Letters}, 52(2):181--186, 1996.

\bibitem[\protect\citeauthoryear{{Lee} \bgroup \em et al.\egroup
  }{2013}]{lee2013real}
Kuang-Chih {Lee}, Ali {Jalali}, and Ali {Dasdan}.
\newblock Real time bid optimization with smooth budget delivery in online
  advertising.
\newblock In {\em Proceedings of the 7th International Workshop on Data Mining
  for Online Advertising}, pages 1--9, 2013.

\bibitem[\protect\citeauthoryear{{Malakhov} and
  {Vohra}}{2008}]{malakhov2008optimal}
Alexey {Malakhov} and Rakesh~V. {Vohra}.
\newblock Optimal auctions for asymmetrically budget constrained bidders.
\newblock {\em Review of Economic Design}, 12(4):245--257, 2008.

\bibitem[\protect\citeauthoryear{{Microsoft Advertising
  Editor}}{2020}]{microsoft:auto}
{Microsoft Advertising Editor}.
\newblock Automatically optimize your campaign with bid strategies.
\newblock \url{https://help.ads.microsoft.com/#apex/22/en/56840/0}, 2020.
\newblock [Online; accessed 10-September-2020].

\bibitem[\protect\citeauthoryear{{Moran} and {Hunt}}{2005}]{moran2005search}
Mike {Moran} and Bill {Hunt}.
\newblock {\em Search Engine Marketing, Inc.: Driving Search Traffic to Your
  Company's Web Site}.
\newblock 2005.

\bibitem[\protect\citeauthoryear{Negin~Golrezaei and
  Leme}{2020}]{golrezaei2020roi}
Ilan~Lobel Negin~Golrezaei and Renato~Paes Leme.
\newblock Auction design for roi-constrained buyers.
\newblock {\em Available at SSRN}, 2020.

\bibitem[\protect\citeauthoryear{{Pai} and {Vohra}}{2014}]{pai2014optimal}
Mallesh~M. {Pai} and Rakesh {Vohra}.
\newblock Optimal auctions with financially constrained buyers.
\newblock {\em Journal of Economic Theory}, 150:383--425, 2014.

\bibitem[\protect\citeauthoryear{{Rey} and {Kannan}}{2010}]{rey2010conversion}
Benjamin {Rey} and Ashvin {Kannan}.
\newblock Conversion rate based bid adjustment for sponsored search.
\newblock In {\em Proceedings of the 19th International Conference on World
  Wide Web}, pages 1173--1174, 2010.

\bibitem[\protect\citeauthoryear{{Wilkens} \bgroup \em et al.\egroup
  }{2016}]{wilkens2016mechanism}
Christopher~A. {Wilkens}, Ruggiero {Cavallo}, and Rad {Niazadeh}.
\newblock Mechanism design for value maximizers.
\newblock {\em arXiv preprint arXiv:1607.04362}, 2016.

\bibitem[\protect\citeauthoryear{{Yang} \bgroup \em et al.\egroup
  }{2019}]{yang2019aiads}
Xiao {Yang}, Daren {Sun}, Ruiwei {Zhu}, Tao {Deng}, Zhi {Guo}, Zongyao {Ding},
  Shouke {Qin}, and Yanfeng {Zhu}.
\newblock Aiads: Automated and intelligent advertising system for sponsored
  search.
\newblock In {\em Proceedings of the 25th ACM SIGKDD International Conference
  on Knowledge Discovery and Data Mining}, pages 1881--1890, 2019.

\bibitem[\protect\citeauthoryear{{Zhang} \bgroup \em et al.\egroup
  }{2012}]{zhang2012joint}
Weinan {Zhang}, Ying {Zhang}, Bin {Gao}, Yong {Yu}, Xiaojie {Yuan}, and Tie-Yan
  {Liu}.
\newblock Joint optimization of bid and budget allocation in sponsored search.
\newblock In {\em Proceedings of the 18th ACM SIGKDD International Conference
  on Knowledge Discovery and Data Mining}, pages 1177--1185, 2012.

\bibitem[\protect\citeauthoryear{{Zhao} \bgroup \em et al.\egroup
  }{2018}]{zhao2018deep}
Jun {Zhao}, Guang {Qiu}, Ziyu {Guan}, Wei {Zhao}, and Xiaofei {He}.
\newblock Deep reinforcement learning for sponsored search real-time bidding.
\newblock In {\em Proceedings of the 24th ACM SIGKDD International Conference
  on Knowledge Discovery and Data Mining}, pages 1021--1030, 2018.

\bibitem[\protect\citeauthoryear{{Zhu} \bgroup \em et al.\egroup
  }{2017}]{zhu2017optimized}
Han {Zhu}, Junqi {Jin}, Chang {Tan}, Fei {Pan}, Yifan {Zeng}, Han {Li}, and Kun
  {Gai}.
\newblock Optimized cost per click in taobao display advertising.
\newblock In {\em Proceedings of the 23rd ACM SIGKDD International Conference
  on Knowledge Discovery and Data Mining}, pages 2191--2200, 2017.

\end{thebibliography}

\end{document}